\providecommand{\U}[1]{\protect\rule{.1in}{.1in}}
\newtheorem{theorem}{Theorem}
\newtheorem{problem}[theorem]{Problem}
\newenvironment{proof}[1][Proof]{\noindent\textbf{#1.} }{\ \rule{0.5em}{0.5em}}
\begin{document}

\title{On Quantum Algorithm for Binary Search and Its Computational Complexity}
\author{\dag S.Iriyama, \dag M.Ohya and \ddag I.V.Volovich\\\dag Tokyo University of Science, Japan\\\ddag Steklov Mathematical Institute, Russia}
\date{}
\maketitle

\begin{abstract}
A new quantum algorithm for a search problem and its computational complexity
are discussed. It is shown in the search problem containing $2^{n} $ objects
that our algorithm runs in polynomial time.

\end{abstract}

\section{Introduction}

Let $X$ and $Y$ be two finite sets and a function $f:X\rightarrow Y$. A search
problem is to find $x\in X$ such that \label{1}$f\left(  x\right)  =y$ for a
given $y\in Y$. There are two different cases for the search problem: (S1) one
is the case that we know there exists at least one solution $x$ of $f\left(
x\right)  =y$ in $X$. (S2) The other is the case that we do not know the
existence of such a solution. The second one is more difficult than the first
one. S1 belongs to a class NP, however S2 does to a class NP-hard\cite{GJ79}.

The search problem has been originally discussed by Levin\cite{L73,L84}, and
Solomonoff\cite{S84} described an algorithm of it. A quantum algorithm of the
search problem S1 was proposed by Grover in 1996\cite{G96}. The computational
complexity of Grover's searching algorithm is a square root of the cardinality
of $X$ denoted by $card\{X\}$.

In this paper, we studied the new quantum algorithm of the search problem S1
and S2 whose computational complexity is polynomial of $card\{X\}$. The idea
of this quantum algorithm is based on the amplification process of the OMV-SAT
algorithm\cite{OV1,OV2,OV3}.

\section{Search Problem}

Since S2 contains S1 as a special case, we will discuss S2 only here. A search
problem is defined by the following.

\begin{problem}
(S2) For a given $f$ and$\ y\in Y$, we ask whether there exists $x\in X$ such
that $f\left(  x\right)  =y$.
\end{problem}

Without loss of generality for discrete cases, we take $X=\left\{
0,1,\cdots,2^{n}-1\right\}  $ and $Y=\left\{  0,1\right\}  $. Let $M_{f,X,Y}$
be a Turing machine calculating $f\left(  x\right)  $ and checking whether
$f\left(  x\right)  =y$ with $x\in X$ and $y\in Y$. It outputs $1$ when
$f\left(  x\right)  =y$, $0$ otherwise. To solve this problem, one can
construct a Turing machine $M_{f}$ running as follows:

Step1: Set a counter $i=0$.

Step2: If $i>2^{n}-1$, then $M_{f}$ outputs "reject", else calls $M_{f,X,Y}$
with the inputs $x=i$ and $y$, so that $M_{f}$ obtains the result.

Step3: If the result of Step 2 is $1$, then it outputs $x$.

Step4: If the result is $0$, then it goes back to Step2 with the counter $i+1
$.

In the worst case, $M_{f}$ must call $M_{f,X,Y}$ for all $x$ to check whether
$f\left(  x\right)  =y$ or not, so that the computational complexity of the
searching algorithm is the cardinal number of $X$.

In the sequel sections, we construct a quantum algorithm to solve the problem
S2, and discuss on the computational complexity of it.

\section{Quantum Searching Algorithm}

From this section, we use a discrete function $f$. Let $n$ be a positive
number, and $f$ a function from $X=\left\{  0,1,\cdots,2^{n}-1\right\}  $ to
$Y=\left\{  0,1\right\}  $.

We show a quantum algorithm to solve the problem S2. To solve this problem, we
denote $x$ by the following binary expression
\begin{equation}
x=\sum_{k=1}^{n}2^{k-1}\varepsilon_{k},
\end{equation}
where $\varepsilon_{1},\cdots,\varepsilon_{n}\in\left\{  0,1\right\}  $.

We divide the problem S2 into several problems as below. Here we start the
following problem:

\begin{problem}
\label{prob1}Whether does there exist $x$ such that $f\left(  x\right)  =1$
with $\varepsilon_{1}=0$?
\end{problem}

If the answer is "yes", namely $\varepsilon_{1}=0$, then there exists at least
one $x=0\varepsilon_{2}\cdots\varepsilon_{n}$ such that $f\left(  x\right)
=1$. If the $\varepsilon_{1}\neq0$, then one considers two cases; the
$\varepsilon_{1}=1$, or there does not exist any $x$ such that $f\left(
x\right)  =1$.

We go to the next problem with the result of the above problem:

\begin{problem}
Whether does there exist $x$ such that $f\left(  x\right)  =1$ with
$\varepsilon_{2}=0$ for the obtained $\varepsilon_{1}$?
\end{problem}

After solving this problem, we know the value of $\varepsilon_{2}$, for
example, when $\varepsilon_{2}=0$, $x$ is written by $00\varepsilon_{3}%
\cdots\varepsilon_{n}$ or $10\varepsilon_{3}\cdots\varepsilon_{n}$.

Furthermore, we check the $\varepsilon_{i},$ $i=3,\cdots,n$ by the same way as
above using the information of the bits from $\varepsilon_{1}$ to
$\varepsilon_{i-1}$. We run the algorithm from $\varepsilon_{1}$ to
$\varepsilon_{n}$, and we look for one $x$ satisfying $f\left(  x\right)  =1$.
Finally in the case that the result of the algorithm is $x=1\cdots1$, we
calculate $f\left(  1\cdots1\right)  $ and check whether $f\left(
1\cdots1\right)  =1$ or not. We conclude that (1) if it becomes $1$,
$x=1\cdots1$ is a solution of search problem, and (2) otherwise, there does
not exist $x$ such that $f\left(  x\right)  =1$.

\section{Chaos Amplifier\label{CA}}

We will use the amplification process to construct the quantum searching
algorithm. For this purpose, in this section, let us review the Chaos
Amplifier along the papers \cite{OV1,OV2} and the book \cite{OV3}.

Consider the so called logistic map which is given by the equation
\begin{equation}
x_{n+1}=ax_{n}(1-x_{n})\equiv g_{a}(x),~~~x_{n}\in\left[  0,1\right]  .
\end{equation}

The properties of the map depend on the parameter $a.$ If we take, for
example, $a=3.71,$ then the Lyapunov exponent is positive, the trajectory is
very sensitive to the initial value and one has the chaotic behavior. It is
important to notice that if the initial value $x_{0}=0,$ then $x_{n}=0$ for
all $n.$

In the sequel sections, when we get the last qubit such that
\begin{equation}
\rho=\left(  1-p\right)  \left\vert 0\right\rangle \left\langle 0\right\vert
+p\left\vert 1\right\rangle \left\langle 1\right\vert ,
\end{equation}
one has to generate that an Abelian algebra by $\left\vert 0\right\rangle
\left\langle 0\right\vert $ and $\left\vert 1\right\rangle \left\langle
1\right\vert $ which can be considered as a classical system. If $p$ is very
small, e.g., $p=\frac{1}{2^{n}}$ with a large $n$, it is practically difficult
to distinguish $p=0$ and $p=\frac{1}{2^{n}}$, then we use the Chaos Amplifier
in the following manner.

Let $\Lambda_{CA}$ be a quantum channel on one qubit space such that
\begin{equation}
\Lambda_{CA}\left(  \rho\right)  =\frac{\left(  I+g_{a}\left(  \rho\right)
\sigma_{3}\right)  }{2},
\end{equation}
where $I$ is the identity matrix and $\sigma_{3}$ is the z-component of Pauli
matrices. Let $k$ be a positive integer, applying $\left(  \Lambda
_{CA}\right)  ^{k}$ to $\rho$, we have%
\begin{equation}
\left(  \Lambda_{CA}\right)  ^{k}\left(  \rho\right)  =\frac{(I+g_{a}%
^{k}(p)\sigma_{3})}{2}=\rho_{k}.
\end{equation}

To find a proper value $k$ we finally measure the value of $\sigma_{3}$ in the
state $\rho_{k}$ such that%

\begin{equation}
x_{k}\equiv\text{tr}\rho_{k}\sigma_{3}.
\end{equation}

The following theorems is proven in \cite{OV1,OV2,OV3}.

\begin{theorem}
For the logistic map $x_{n+1}=ax_{n}\left(  1-x_{n}\right)  $ with
$a\in\left[  0,4\right]  $ and $x_{0}\in\left[  0,1\right]  $, let $x_{0}$ be
$\frac{1}{2^{n}}$ and a set $J$ be $\left\{  0,1,2,\dots,n,\dots,2n\right\}
$. If $a$ is $3.71$, then there exists an integer $k$ in $J$ satisfying
$x_{k}>\frac{1}{2}.$
\end{theorem}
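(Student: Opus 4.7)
The plan is to argue by contradiction. Suppose, toward a contradiction, that $x_{k}\leq\tfrac{1}{2}$ for every $k\in J=\{0,1,\dots,2n\}$. The first observation I would make is that whenever $x_{k}\leq\tfrac{1}{2}$ we have $1-x_{k}\geq\tfrac{1}{2}$, so the one-step recursion gives the linear lower bound
\begin{equation*}
x_{k+1}=a\,x_{k}(1-x_{k})\;\geq\;\frac{a}{2}\,x_{k}.
\end{equation*}
Since the standing assumption supplies this hypothesis at every $k<2n$, I can iterate the inequality from $k=0$ up to $k=2n$.

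Iterating then yields the explicit geometric lower bound
\begin{equation*}
x_{2n}\;\geq\;\Bigl(\tfrac{a}{2}\Bigr)^{2n}x_{0}\;=\;\frac{a^{2n}}{2^{3n}}\;=\;\Bigl(\tfrac{a^{2}}{8}\Bigr)^{\!n}.
\end{equation*}
Substituting $a=3.71$ gives $a^{2}/8=13.7641/8>1.72$, so $x_{2n}\geq 1.72^{n}>\tfrac{1}{2}$ for every $n\geq 1$. This contradicts $x_{2n}\leq\tfrac{1}{2}$ and forces some $k\in J$ with $x_{k}>\tfrac{1}{2}$. The degenerate case $n=0$ is trivial since $x_{0}=1>\tfrac{1}{2}$.

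I do not anticipate a serious obstacle. The only quantitative ingredient is that $a=3.71>2$, which is precisely what makes the multiplier $a/2$ in the linear bound strictly greater than one, so the tiny seed $x_{0}=2^{-n}$ is forced to escape the neutral fixed point at the origin within $O(n)$ iterations. In fact, solving $(a/2)^{k}2^{-n}>\tfrac{1}{2}$ shows that $k\gtrsim (n-1)\log 2/\log(a/2)\approx 1.12\,(n-1)$ already suffices, so the cardinality $|J|=2n+1$ is comfortably generous and leaves room for sharper versions. Notably, none of the chaotic features of the logistic map at $a=3.71$ (positive Lyapunov exponent, sensitivity to initial conditions) are invoked in the argument; only the instability of the fixed point $x=0$ is used.
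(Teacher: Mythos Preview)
Your argument is correct and complete. Note, however, that the paper does not actually prove this theorem in the text: the sentence immediately preceding the two theorems reads ``The following theorems is proven in [OV1,OV2,OV3],'' and no argument is supplied in the paper itself. Hence there is no in-paper proof to compare your approach against. Your method---linearising the recursion via $x_{k+1}\geq\tfrac{a}{2}\,x_{k}$ whenever $x_{k}\leq\tfrac{1}{2}$, and iterating to force $x_{2n}\geq(a^{2}/8)^{n}>\tfrac{1}{2}$---is the natural elementary route, using only the instability of the fixed point at $0$ rather than any chaotic feature of the map. As a side remark, the threshold calculation you sketch at the end, namely that $(a/2)^{k}2^{-n}>\tfrac{1}{2}$ requires $k>(n-1)/\log_{2}(a/2)=(n-1)/(\log_{2}3.71-1)$, is precisely the quantity appearing in the paper's subsequent theorem.
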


\begin{theorem}
Let $a$ and $n$ be the same in above theorem. If there exists $k$ in $J$ such
that $x_{k}>\frac{1}{2},$ then
\begin{equation}
k>\frac{n-1}{\log_{2}3.71-1}>\frac{5}{4}\left(  n-1\right)  .
\end{equation}

\end{theorem}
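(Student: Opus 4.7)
The plan is to enclose the iterates $\{x_k\}$ of the logistic map inside a geometric envelope and then invert by taking logarithms.

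First, I would observe the one-line comparison $g_a(x) = ax(1-x) \leq ax$ for all $x \in [0,1]$, which holds simply because $1 - x \leq 1$. A routine induction on $k$, starting from $x_0 = 1/2^n$, then gives the envelope
\[
x_k \;\leq\; a^k\, x_0 \;=\; \frac{a^k}{2^n}.
\]

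Next, I would feed the hypothesis $x_k > 1/2$ into this envelope to obtain $a^k/2^n > 1/2$, i.e.\ $a^k > 2^{n-1}$. Taking $\log_2$ of both sides yields $k\,\log_2 a > n-1$, so
\[
k \;>\; \frac{n-1}{\log_2 a}.
\]
Specialising $a = 3.71$ and absorbing the small correction $1 - x_k \leq 1 - 1/2^n$ at the earliest iterates (this refinement of the crude $1-x_k \leq 1$ is what I would expect to account for the slightly sharper denominator $\log_2 3.71 - 1$ appearing in the statement) produces the first inequality claimed.

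Finally, the second inequality $\frac{n-1}{\log_2 3.71 - 1} > \frac{5}{4}(n-1)$ would be disposed of as a purely arithmetic check: it is equivalent to $\log_2 3.71 - 1 < 4/5$, i.e.\ $3.71 < 2^{9/5}$, a finite numerical comparison.

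The main obstacle I anticipate is not conceptual but bookkeeping. The conceptual content is only the observation $g_a(x) \leq ax$, after which the proof is purely logarithmic. What requires care is (i) sharpening the geometric envelope enough to justify the denominator $\log_2 3.71 - 1$ rather than the cruder $\log_2 3.71$ one gets from the naive bound, and (ii) pinning down the numerical value of $\log_2 3.71$ accurately enough to make both strict inequalities in the claimed chain go in the asserted direction.
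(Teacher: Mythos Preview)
First, note that the paper does not actually prove this theorem: it merely records the statement and cites \cite{OV1,OV2,OV3} for the proof, so there is no in-paper argument to compare your proposal against.

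On the merits: your envelope $x_k \le a^k x_0$ is correct and yields $k > (n-1)/\log_2 3.71 \approx 0.53\,(n-1)$, but this is strictly \emph{weaker} than the first inequality in the statement, whose denominator is $\log_2 3.71 - 1$. The refinement you sketch cannot close that gap. Replacing $1-x_j \le 1$ by $1-x_j \le 1-2^{-n}$ only changes the effective growth factor from $a$ to $a(1-2^{-n})$, shifting the denominator by $O(2^{-n})$, not by a full unit. To obtain the denominator $\log_2 3.71 - 1 = \log_2(3.71/2)$ via your route one would need $1-x_j \le 1/2$, i.e.\ $x_j \ge 1/2$, for every $j<k$; this fails already at $j=0$ since $x_0=2^{-n}$. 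So the hand-wave ``absorbing the small correction'' does not produce the claimed first inequality.

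More importantly, you should have actually carried out the arithmetic check you announced for the second inequality. One finds $2^{9/5}\approx 3.48 < 3.71$, so $\log_2 3.71 \approx 1.89$ and $\log_2 3.71 - 1 \approx 0.89 > 4/5$. Hence
\[
\frac{n-1}{\log_2 3.71 - 1}\;\approx\;1.12\,(n-1)\;<\;\frac{5}{4}(n-1),
\]
and the second inequality as printed is \emph{false} for $n>1$. The theorem statement in the paper evidently contains an error (a reversed inequality, or a spurious $-1$ in the denominator, or both); a careful execution of your own plan would have flagged this rather than asserting it could be ``disposed of''.
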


Using these theorems, we can easily check whether the state $\rho=\left\vert
0\right\rangle \left\langle 0\right\vert $ or not. Note that this
amplification process can be written in the generalized Turing machine form
\cite{IO1}, and it is related to the semigroup dynamics \cite{IO2}.

\section{Quantum Binary Searching Algorithm}

Let $m$ be a positive integer which can be written by a polynomial in $n$. Let
$\mathcal{H}=\left(  \mathbb{C}^{2}\right)  ^{\otimes n+m+1}$ be a Hilbert
space. The $m$ qubits are used for the computation of $f$, and the dust qubits
are produced by this computation. When $f$ is given, we can fix $m$. We will
show in the next section that this algorithm can be done in a polynomial time.

We construct the following quantum algorithm $M_{Q}^{\left(  1\right)  }$ to
solve the problem \ref{prob1}. Let $\left\vert \psi_{in}^{\left(  1\right)
}\right\rangle =\left\vert 0^{n}\right\rangle \otimes\left\vert 0^{m}%
\right\rangle \otimes\left\vert 0\right\rangle \in\mathcal{H}$ be an initial
vector for $M_{Q}^{\left(  1\right)  }$, where\ the upper index $\left(
1\right)  $ comes from the quantum algorithm checking the bit $\varepsilon
_{1}$. The last qubit of $\left\vert \psi_{in}^{\left(  1\right)
}\right\rangle $ is for the answer of it, namely "yes" or "no". If the answer
is "yes", then the last qubit becomes $\left\vert 1\right\rangle $, otherwise
$\left\vert 0\right\rangle $.

The quantum algorithm $M_{Q}^{\left(  1\right)  }$ is given by the following
steps. We start $M_{Q}^{\left(  1\right)  }$ with $\varepsilon_{1}=0$.

Step1: Apply Hadamard gates from the $2$nd qubit to the $n$-th qubit.%
\begin{align}
I\otimes U_{H}^{\otimes n-1}\otimes I^{m+1}\left\vert \psi_{in}^{\left(
1\right)  }\right\rangle  &  =\frac{1}{\sqrt{2^{n-1}}}\left\vert
\varepsilon_{1}\left(  =0\right)  \right\rangle \otimes\left(  \sum
_{i=0}^{2^{n-1}-1}\left\vert e_{i}\right\rangle \right)  \otimes\left\vert
0^{m}\right\rangle \otimes\left\vert 0\right\rangle \nonumber\\
&  =\left\vert \psi_{1}^{\left(  1\right)  }\right\rangle ,
\end{align}
where $\left\vert e_{i}\right\rangle $ are
\begin{align}
\left\vert e_{0}\right\rangle  &  =\left\vert 0\cdots0\right\rangle
\nonumber\\
\left\vert e_{1}\right\rangle  &  =\left\vert 1\cdots0\right\rangle
\nonumber\\
&  \vdots\nonumber\\
\left\vert e_{2^{n-1}-1}\right\rangle  &  =\left\vert 1\cdots1\right\rangle .
\end{align}

Let $U_{f}$ be the unitary operator on $\mathcal{H}=\left(  \mathbb{C}%
^{2}\right)  ^{\otimes n+m+1}$ to compute $f$, defined by%
\begin{equation}
U_{f}\left\vert x\right\rangle \otimes\left\vert 0^{m}\right\rangle
\otimes\left\vert 0\right\rangle =\left\vert x\right\rangle \otimes\left\vert
z_{x}\right\rangle \otimes\left\vert f\left(  x\right)  \right\rangle ,
\end{equation}
where $z_{x}$ is the dust qubit produced by the computation.

Step2: Apply the unitary operator $U_{f}$ to the state made in Step1, and
store the result in the last qubit.%
\begin{align}
U_{f}\left\vert \psi_{1}^{\left(  1\right)  }\right\rangle  &  =\frac{1}%
{\sqrt{2^{n-1}}}\left\vert 0\right\rangle \otimes\left(  \sum_{i=0}%
^{2^{n-1}-1}\left\vert e_{i}\right\rangle \otimes\left\vert z_{i}\right\rangle
\otimes\left\vert f\left(  0e_{i}\right)  \right\rangle \right) \nonumber\\
&  =\left\vert \psi_{2}^{\left(  1\right)  }\right\rangle ,
\end{align}
where $z_{i}$ is the dust qubits depending on $e_{i}$.

Step3: We take the last qubit by the projection from the final state
$\left\vert \psi_{2}^{\left(  1\right)  }\right\rangle $ such that
\begin{equation}
\left(  1-p\right)  \left\vert 0\right\rangle \left\langle 0\right\vert
+p\left\vert 1\right\rangle \left\langle 1\right\vert ,
\end{equation}
where $p=card\left\{  x|f\left(  x\right)  =1,x=0\varepsilon_{2}%
\cdots\varepsilon_{n}\right\}  /2^{n-1}$, and this state is the state $\rho$
given in the previous section.

Step4: After the above formula, the state is a pure state or a mixed state. If
the state is mixed and $p\neq0$ however very small, then apply the Chaos
Amplifier given in the Section\ref{CA} to check whether the last qubit is in
the state $\left\vert 1\right\rangle \left\langle 1\right\vert $. If we find
that the last qubit is in the state $\left\vert 1\right\rangle \left\langle
1\right\vert $, then $p\neq0$, which implies that there exists at least one
solution of $f\left(  x\right)  =1$ for $\varepsilon_{1}=0$. If we do not find
that the last qubit is in the state $\left\vert 1\right\rangle \left\langle
1\right\vert $, namely $p=0$, then there are two possibilities that are
$\varepsilon_{1}=1$ or no solutions $x\in X$ of $f\left(  x\right)  =1$.

After this algorithm, we know that if $\varepsilon_{1}=0$ or $1$, then the
last qubit is $1$ or $0$, respectively. We write this process as
$M_{Q}^{\left(  1\right)  }\left(  0^{n}\right)  =\varepsilon_{1}$ where
$0^{n}$ means the initial vector.

Next we modify Step1 of the algorithm $M_{Q}^{\left(  1\right)  }$ as:

Step1: Apply Hadamard gates from $3$rd qubit to $n$-th qubit.

And we call this algorithm $M_{Q}^{\left(  2\right)  }$. The index $\left(
2\right)  $ means that the algorithm check $\varepsilon_{2}$. We start
$M_{Q}^{\left(  2\right)  }$ with the initial vector $\left\vert \psi
_{in}^{\left(  2\right)  }\right\rangle =\left\vert \varepsilon_{1}%
,0^{n-1}\right\rangle \otimes\left\vert 0^{m}\right\rangle \otimes\left\vert
0\right\rangle $ instead of $\left\vert \psi_{in}^{\left(  1\right)
}\right\rangle $.

So forth we obtain the bit $\varepsilon_{2}$, and write as $M_{Q}^{\left(
2\right)  }\left(  \varepsilon_{1},0^{n-1}\right)  =M_{Q}^{\left(  2\right)
}\left(  M_{Q}^{\left(  1\right)  }\left(  0^{n}\right)  ,0^{n-1}\right)
=\varepsilon_{2}$.

In generally, we write the algorithm $M_{Q}^{\left(  i\right)  }\left(
\varepsilon_{1},\varepsilon_{2},\cdots,\varepsilon_{i-1},0^{n-i+1}\right)  $
for an initial vector $\left\vert \psi_{in}^{\left(  i\right)  }\right\rangle
=\left\vert \varepsilon_{1},\varepsilon_{2},\cdots,\varepsilon_{i-1}%
,0^{n-i+1}\right\rangle \otimes\left\vert 0^{m}\right\rangle \otimes\left\vert
0\right\rangle $ as the following:

Step1: Apply Hadamard gates from $i+1$-th to $n$-th qubits$.$%
\begin{align}
I^{\otimes i}\otimes U_{H}^{\otimes n-i}\otimes I^{m+1}\left\vert \psi
_{in}^{\left(  i\right)  }\right\rangle  &  =\frac{1}{\sqrt{2^{n-i}}%
}\left\vert \varepsilon_{1},\varepsilon_{2},\cdots,\varepsilon_{i-1}%
\right\rangle \otimes\left(  \sum_{k=0}^{2^{n-i}-1}\left\vert e_{k}%
\right\rangle \right)  \otimes\left\vert 0^{m}\right\rangle \otimes\left\vert
0\right\rangle \nonumber\\
&  =\left\vert \psi_{1}^{\left(  i\right)  }\right\rangle .
\end{align}

Step2: Apply the unitary gate to compute $f$ for the superposition made in
Step1, and store the result in $n+m+1$-th qubit.%
\begin{align}
U_{f}\left\vert \psi_{1}^{\left(  i\right)  }\right\rangle  &  =\frac{1}%
{\sqrt{2^{n-i}}}\left\vert \varepsilon_{1},\varepsilon_{2},\cdots
,\varepsilon_{i-1}\right\rangle \otimes\left(  \sum_{k=0}^{2^{n-1}%
-1}\left\vert e_{k}\right\rangle \otimes\left\vert z_{k}\right\rangle
\otimes\left\vert f\left(  \varepsilon_{1},\varepsilon_{2},\cdots
,\varepsilon_{i-1},e_{k}\right)  \right\rangle \right) \nonumber\\
&  =\left\vert \psi_{2}^{\left(  i\right)  }\right\rangle .
\end{align}

Step3: Take the last qubit by the projection from the final state $\left\vert
\psi_{2}^{\left(  i\right)  }\right\rangle $ such that%

\begin{equation}
\left(  1-p\right)  \left\vert 0\right\rangle \left\langle 0\right\vert
+p\left\vert 1\right\rangle \left\langle 1\right\vert .
\end{equation}

Step4: Apply the Chaos Amplifier to the amplitude $p$, so that we can easily
find that the last qubit is $\left\vert 1\right\rangle \left\langle
1\right\vert $.

After this algorithm $M_{Q}^{\left(  i\right)  }\left(  \varepsilon
_{1},\varepsilon_{2},\cdots,\varepsilon_{i-1},0^{n-i+1}\right)  $, we know the
bit $\varepsilon_{i}$ such that $f\left(  x\right)  =1$. Each $M_{Q}^{\left(
i\right)  }$, $i\geq2$ use the result of all $M_{Q}^{\left(  j\right)  }$
$\left(  j<i\right)  $ as an initial vector. We run this algorithm
$M_{Q}^{\left(  i\right)  }$ for each $i$ $\left(  i=1,\cdots,n\right)  $.

\section{Computational Complexity of the Quantum Binary Search Algorithm}

In this section, we calculate the computational complexity of the quantum
algorithm for binary search. The computational complexity is the number of
total unitary gates discussed above and amplification channels in our search algorithm.

In the above section, the quantum algorithm for binary search is given by the
products of unitary gates denoted by $U_{i}$ below. Let $\left\vert \psi
_{in}^{\left(  i\right)  }\right\rangle $ be an initial vector for the
algorithm $M_{Q}^{\left(  i\right)  }$ as%
\begin{equation}
\left\vert \psi_{in}^{\left(  i\right)  }\right\rangle =\left\vert
\varepsilon_{1}\cdots\varepsilon_{i-1},0^{n-i}\right\rangle \otimes\left\vert
0^{m}\right\rangle \otimes\left\vert 0\right\rangle ,
\end{equation}
and it goes to the final vector%
\begin{align}
U_{i}\left\vert \psi_{in}^{\left(  i\right)  }\right\rangle  &  =\frac
{1}{\sqrt{2^{n-i}}}\left\vert \varepsilon_{1},\cdots,\varepsilon
_{i-1}\right\rangle \otimes\left(  \sum_{k=0}^{2^{n-1}-1}\left\vert
e_{k}\right\rangle \otimes\left\vert z_{k}\right\rangle \otimes\left\vert
f\left(  \varepsilon_{1},\cdots,\varepsilon_{i-1},e_{k}\right)  \right\rangle
\right) \nonumber\\
&  =\left\vert \psi_{2}^{\left(  i\right)  }\right\rangle ,
\end{align}
where $f\left(  \varepsilon_{1},\cdots,\varepsilon_{i-1},e_{k}\right)  $ is
the result of the objective function for a search problem. The above unitary
gates $U_{i}$ for the algorithm $M_{Q}^{\left(  i\right)  }$ are defined by%
\begin{equation}
U_{i}=U_{f}\left(  U_{H}\right)  ^{\otimes n-i}%
{\displaystyle\prod\limits_{\left\{  x_{k}|x_{k}=1\right\}  }}
U_{NOT}\left(  k\right)  ,
\end{equation}
where $U_{NOT}\left(  k\right)  $ is to apply the NOT gate for the $k$-th
qubit only when the result of stage $k$ is $1$, $\left(  k=1,2,\cdots
i-1\right)  $ .

The computational complexity $T$ of the quantum binary search algorithm
$T\left(  U_{n}\right)  $ is given by the total number of unitary gates and
quantum channels for the amplification. We obtain the following theorem.

\begin{theorem}
We have
\begin{equation}
T=\left[  \frac{13}{8}n^{2}-\frac{9}{4}n+nT\left(  U_{f}\right)  \right]  +1,
\end{equation}
where $\left[  \cdot\right]  $ means the Gauss symbol, and $T\left(
U_{f}\right)  $ is a given complexity associated to the function $f$.
\end{theorem}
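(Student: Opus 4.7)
The plan is to count, stage by stage, every elementary gate and every amplifier channel invoked by the algorithm, and to combine the totals with the logistic-map bounds of Section~\ref{CA}. The overall count decomposes as
\[ T = \sum_{i=1}^{n} T\!\left(M_Q^{(i)}\right) + 1, \]
where the trailing $+1$ accounts for the classical evaluation of $f(1\cdots 1)$ that must be performed when the algorithm reports every $\varepsilon_i = 1$, in order to distinguish ``$x = 1\cdots 1$ is a solution'' from ``no solution exists'' (as described just before Section~\ref{CA}).

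To count $T\!\left(M_Q^{(i)}\right)$, I would read off the factorisation $U_i = U_f\,(U_H)^{\otimes n-i}\,\prod_{k<i} U_{NOT}(k)$ together with Step~4. This yields four contributions: (i) at most $i-1$ NOT gates that encode $\varepsilon_1,\ldots,\varepsilon_{i-1}$ into the register, (ii) exactly $n-i$ Hadamard gates on the coordinates $i+1,\ldots,n$, (iii) one application of $U_f$, contributing $T(U_f)$, and (iv) some number $k_i$ of applications of the Chaos Amplifier $\Lambda_{CA}$. Summing (i)--(iii) over $i = 1,\ldots,n$ is immediate:
\[ \sum_{i=1}^{n}(i-1) + \sum_{i=1}^{n}(n-i) + nT(U_f) = n(n-1) + nT(U_f). \]

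For (iv), the worst-case amplitude of $|1\rangle$ on the last qubit after Step~3 is $p = 1/2^{n-i}$, so the two logistic-map theorems of Section~\ref{CA}, applied with $n \mapsto n-i$, sandwich the first index $k_i$ at which $x_{k_i} > 1/2$ in
\[ \tfrac{5}{4}(n-i-1) < k_i \le 2(n-i). \]
Choosing $k_i$ near the lower end of this window and summing over $i$ should contribute a term $\tfrac{5}{8}n^2 - \tfrac{5}{4}n$ up to the floor, which when combined with $n(n-1) + nT(U_f)$ reassembles to precisely $\bigl[\tfrac{13}{8}n^2 - \tfrac{9}{4}n + nT(U_f)\bigr]$, plus the final $+1$.

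The main obstacle is pinning down the exact integer value of $k_i$ in each stage: Section~\ref{CA} provides only two-sided bounds on when the logistic orbit crosses $1/2$, so one must commit to a specific arithmetic choice (as a function of $n-i$) and verify that the resulting sum $\sum_i k_i$ produces exactly the required $\tfrac{5}{8}n^2 - \tfrac{5}{4}n$ before the Gauss symbol is applied, rather than a nearby variant differing by $O(n)$. Once that choice is fixed, what remains is a routine arithmetic-progression computation together with the interpretation of the floor function.
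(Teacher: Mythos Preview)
Your proposal follows essentially the same route as the paper: split $T$ into the unitary-gate count and the amplifier-channel count, evaluate each as an arithmetic sum over the stages $i=1,\dots,n$, and combine. Your computation $\sum_i(i-1)+\sum_i(n-i)+nT(U_f)=n(n-1)+nT(U_f)$ matches the paper exactly, and your amplifier estimate $\tfrac{5}{8}n^2-\tfrac{5}{4}n=\tfrac{5}{8}n(n-2)$ is precisely what the paper obtains by treating the per-stage count $\tfrac{5}{4}\bigl((n-i)-1\bigr)$ as an arithmetic progression from $\tfrac{5}{4}(n-2)$ down to $0$.

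One small divergence worth flagging: you attribute the trailing $+1$ to the final classical evaluation of $f(1\cdots1)$, whereas in the paper it enters purely through the floor-function step $\tfrac{5}{8}n(n-2)\le\bigl[\tfrac{5}{8}n(n-2)\bigr]+1$ in the amplifier tally; the paper does not separately count the last $f$-check. Either reading yields the same formula, but if you want to match the paper you should move the $+1$ into the amplifier bookkeeping rather than outside the sum. Your concern about the lower bound $\tfrac{5}{4}(n-i-1)<k_i$ being only one side of a sandwich is well taken---the paper simply adopts $\bigl[\tfrac{5}{4}(n-i-1)\bigr]+1$ as \emph{the} count without further argument, so you are not missing a hidden justification there.
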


\begin{proof}
For the algorithm $M_{Q}^{\left(  i\right)  }$, one should have the following
gates: $i-1$ NOT gates, $n-i$ Hadamard gates and $U_{f}$, so that the
computational complexity $T_{i}$ for the algorithm $M_{Q}^{\left(  i\right)
}$ is given by%
\begin{align}
T_{i}  &  =n-i+i-1+T\left(  U_{f}\right) \nonumber\\
&  =n-1+T\left(  U_{f}\right)  .
\end{align}
The total number of stages is $n$, then the computational complexity is
\begin{equation}
\sum_{i=1}^{n}T_{i}=n\left(  n-1\right)  +nT\left(  U_{f}\right)  .
\end{equation}
For the amplification process explained above, the number of amplification
channels for $n$ qubits was shown as%
\begin{equation}
\left[  \frac{5}{4}\left(  n-1\right)  \right]  +1.
\end{equation}
In the algorithm $M_{Q}^{\left(  i\right)  }$, we have to apply $\left[
\frac{5}{4}\left(  n-1\right)  \right]  +1$ times amplification channels for
$\left(  n-i\right)  $ qubits at worst. Then the number of total quantum
channels is calculated as%
\begin{equation}
\frac{1}{2}n\left(  \frac{5}{4}\left(  n-1-1\right)  +0\right)  \leq\left[
\frac{5}{8}n\left(  n-2\right)  \right]  +1.
\end{equation}
Therefore the computational complexity $T$ becomes%
\begin{align}
T  &  =\sum_{i=1}^{n}T\left(  U_{i}\right)  +\left[  \frac{5}{8}n\left(
n-2\right)  \right]  +1\nonumber\\
&  =n\left(  n-1\right)  +nT\left(  U_{f}\right)  +\left[  \frac{5}{8}n\left(
n-2\right)  \right]  +1\nonumber\\
&  \leq\left[  \frac{13}{8}n^{2}-\frac{9}{4}n+nT\left(  U_{f}\right)  \right]
+1.
\end{align}

\end{proof}

Note that the above $T\left(  U_{f}\right)  $ is essentially polynomial in $n$.

\section{Conclusion}

In this paper, we constructed the quantum algorithm for searching probrem S2
for a given $f$ and$\ y\in Y$ with the cardinal number of $X;card\left\{
X\right\}  =2^{n}$. Our quantum algorithm can be written in a combination of
quantum algorithms $M_{Q}^{\left(  i\right)  }$ for $i=1,\cdots,n$. The each
quantum algorithms are run sequentially, in the other words, $M_{Q}^{\left(
i\right)  }$ uses the result of $M_{Q}^{\left(  i-1\right)  }$. This quantum
algorithm is able to check whether there exists a cirtain $x$ such that
$f\left(  x\right)  =y$ or not; it solved an NP-hard problem. We proved that
the computational complexity of our quantum searching algorithm is polynomial
in $n$.

\end{document}